\newtheorem{theorem}{Theorem}
\newtheorem{lemma}[theorem]{Lemma}
\newtheorem{claim}[theorem]{Claim}
\def\@endtheorem{\endtrivlist}
\newcommand{\Deg}[1]{\deg(#1)}
\newcommand{\Degb}[2]{\deg_{#1}(#2)}
\newcommand{\Rule}[3]{(#1,#2)}
\begin{document}

\title{Parameterized algorithm for 3-path vertex cover}
\author{Dekel Tsur%
\thanks{Department of Computer Science, Ben-Gurion University of the Negev.
Email: \texttt{dekelts@cs.bgu.ac.il}}}
\date{}
\maketitle

\begin{abstract}
In the $3$-path vertex cover problem,
the input is an undirected graph $G$ and an integer $k$.
The goal is to decide whether there is a set of vertices $S$ of
size at most $k$ such that every path with $3$ vertices in $G$ contains at
least one vertex of $S$.
In this paper we give parameterized algorithm for $3$-path cover
whose time complexity is $O^*(1.713^k)$.
Our algorithm is faster than previous algorithms for this problem.
\end{abstract}


\section{Introduction}
For an undirected graph $G$, an \emph{$l$-path vertex cover} is a set of
vertices $S$ such that every path with $d$ vertices in $G$ contains at
least one vertex of $S$.
In the \emph{$l$-path vertex cover problem},
the input is an undirected graph $G$ and an integer $k$. The goal is to decide
whether there is an $l$-path vertex cover of $G$ with size at most $k$.
The problem for $l=2$ is the famous vertex cover problem.
The problem is NP-hard for every constant $l \geq 2$.

In this paper we consider the 3-path vertex cover problem.
The first non-trivial parameterized algorithm for the 3-path vertex cover
problem was given by Tu~\cite{tu2015fixed},
which gave an $O^*(2^k)$-time algorithm.
This result was subsequently improved several times:
Wu~\cite{wu2015measure} gave an $O^*(1.882^k)$-time algorithm,
Katreni\v{c}~\cite{katrenivc2016faster} gave an $O^*(1.818^k)$-time algorithm,
and Chang et al.~\cite{chang2016fixed} gave an $O^*(1.749^k)$-time algorithm,
and Xiao and Kou~\cite{xiao2017kernelization} also gave an $O^*(1.749^k)$-time
algorithm.
The algorithm of Chang et al.\ uses exponential space while the algorithm
of Xiao and Kou uses polynomial space.
The non-parameterized 3-path vertex cover problem has also been
studied~\cite{kardovs2011computing,chang20141,xiao2017exact}.
The fastest exact algorithm for this problem runs in $O^*(1.366^n)$
time~\cite{xiao2017exact}.

In this paper, we give an algorithm whose time complexity is $O^*(1.713^k)$.
This improves the previous parameterized algorithms for this problem.
Our algorithm is based on the algorithm of Xiao and
Kou~\cite{xiao2017kernelization}. In Section~\ref{sec:Xiao} we describe
the algorithm of Xiao and Kou, and in Section~\ref{sec:new} we describe
our algorithm.

\section{Preliminaries}
For a graph $G=(V,E)$ and a vertex $v\in V$,
$N_G(v) = \{u \in V\colon (u,v)\in E\}$,
$N_G[v] = N_G(v) \cup \{v\}$, and
$\Degb{G}{v} = |N_G(v)|$.
For a set of vertices $S$,
$N_G(S) = (\bigcup_{v\in S} N_G[S]) \setminus S$ and
$N_G[S] = \bigcup_{v\in S} N_G[v]$.

For a graph $G=(V,E)$ and a set of vertices $S$, $G[S]$ is the subgraph
of $G$ induced by $S$ (namely, $G[S]=(S,E\cap (S\times S)$).
We also define $G-S = G[V\setminus S]$.
For a set that consists of a single vertex $v$, we write $G-v$ instead of
$G-\{v\}$.

A vertex $v$ \emph{dominates} a vertex $u$ if $(u,v)\in E$ and
$N(u) \subseteq N[v]$. We also say that $v$ is a \emph{dominating vertex}.
A vertex $v$ \emph{weakly dominates} a vertex $u$ if $(u,v)\in E$ and
there is a vertex $x \in N(u)$ such that $x\notin N[v]$ and
$N(u)\setminus\{x\} \subseteq N[v]$.

\begin{claim}\label{clm:dominating}
If $u,v$ are adjacent vertices and $v$ does not dominate $u$ then
$|N(\{u,v\})| \geq \Deg{v}$. 
\end{claim}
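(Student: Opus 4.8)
The plan is to exploit the failure of domination to exhibit one ``extra'' neighbor and then to account for the neighbors of $v$ separately. First I would unfold the definition of the joint neighborhood: since $N(\{u,v\}) = N[\{u,v\}]\setminus\{u,v\} = (N(u)\cup N(v))\setminus\{u,v\}$, it suffices to find $\Deg{v}$ distinct vertices inside $(N(u)\cup N(v))\setminus\{u,v\}$.

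Next I would locate a witness for non-domination. Because $u$ and $v$ are adjacent we have $(u,v)\in E$, so the only way the hypothesis ``$v$ does not dominate $u$'' can hold is through the failure of the inclusion $N(u)\subseteq N[v]$; that is, there exists a vertex $x\in N(u)$ with $x\notin N[v]$. In particular $x\neq v$ and $x\notin N(v)$, and since $x\in N(u)$ we also have $x\neq u$ (there are no self-loops). Hence $x\in (N(u)\cup N(v))\setminus\{u,v\}$, so $x\in N(\{u,v\})$, while $x\notin N(v)$.

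Then I would count. Since $u$ and $v$ are adjacent, $u\in N(v)$, whereas $v\notin N(v)$; therefore removing $\{u,v\}$ from $N(v)$ deletes exactly the single element $u$, leaving $|N(v)\setminus\{u,v\}| = \Deg{v}-1$. These $\Deg{v}-1$ vertices all lie in $N(\{u,v\})$, and the witness $x$ lies in $N(\{u,v\})$ as well but is not among them (as $x\notin N(v)$). Adjoining $x$ therefore exhibits $\Deg{v}$ distinct vertices of $N(\{u,v\})$, which gives $|N(\{u,v\})|\geq\Deg{v}$.

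I do not expect a genuine obstacle here; this is a short counting argument. The only point requiring care is the bookkeeping around the two removed vertices $u$ and $v$: one must check that exactly one of them (namely $u$) belongs to $N(v)$, so that the count drops by precisely one, and that the non-domination witness $x$ is truly disjoint from $N(v)$ and distinct from both $u$ and $v$. Once these are verified the bound follows immediately.
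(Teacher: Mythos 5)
Your proof is correct. The paper states Claim~\ref{clm:dominating} without proof (treating it as immediate), and your argument is exactly the intended one: the $\Deg{v}-1$ vertices of $N(v)\setminus\{u\}$ together with a non-domination witness $x\in N(u)\setminus N[v]$ give $\Deg{v}$ distinct elements of $N(\{u,v\})$.
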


A \emph{chain} is a path $x,x_1,x_2,x_3$ such that $x\neq x_3$,
$\Deg{x} \geq 3$, and $\Deg{x_1} = \Deg{x_2} = 2$.

\begin{claim}\label{clm:chain}
If $u,v$ are adjacent vertices without a common neighbor and
$\Deg{u} = \Deg{v} = 2$ then
$G$ contains a chain unless $G$ is a cordless path or a cordless cycle.
\end{claim}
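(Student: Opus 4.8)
The plan is to exploit the two degree-$2$ vertices $u,v$ as the seed of a maximal path of degree-$2$ vertices, and to argue that either this path runs into a vertex of degree at least $3$ (producing a chain) or it closes up into the whole component. First I would fix notation: let $a$ be the neighbor of $u$ other than $v$, and $b$ the neighbor of $v$ other than $u$. Since $\Deg{u}=\Deg{v}=2$ these are well defined, and because $u,v$ have no common neighbor we get $a\neq b$ (while $a\neq v$ and $b\neq u$ hold by definition). This single inequality $a\neq b$ is exactly what is needed to make the two ends of a would-be chain distinct.

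I would dispose of the easy case first: if $\Deg{a}\geq 3$ then $a,u,v,b$ is a chain, since $\Deg{u}=\Deg{v}=2$ supply the two interior degree-$2$ vertices and $a\neq b$ gives $x\neq x_3$; symmetrically, $\Deg{b}\geq 3$ yields the chain $b,v,u,a$. So I may assume $\Deg{a},\Deg{b}\leq 2$ and extend outward. Starting from the edge $uv$, I would walk away from $u$ through $v,b,\dots$, always moving to the not-yet-visited neighbor of the current degree-$2$ vertex, and stop upon reaching a vertex of degree $\neq 2$ or upon revisiting a vertex; I would do the same walking away from $v$ through $u,a,\dots$. This produces a maximal path $P=p_0,p_1,\dots,p_s$ containing the edge $uv$ whose internal vertices $p_1,\dots,p_{s-1}$ all have degree $2$; since $u,v$ are internal, $s\geq 3$.

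I would then read the conclusion off the two endpoints $p_0,p_s$. If $\Deg{p_0}\geq 3$ then $p_0,p_1,p_2,p_3$ is a chain ($p_1,p_2$ are internal, hence of degree $2$, and $p_0\neq p_3$ as the $p_i$ are distinct); symmetrically for $p_s$. If $\Deg{p_0}=\Deg{p_s}=1$, then every vertex of $P$ has all its neighbors on $P$, so the component containing $u,v$ is exactly the chordless path $P$. The remaining possibility is that the walk closes up, forcing a cycle all of whose vertices have degree $2$; such a cycle has no chords (a chord would create a degree-$\geq 3$ vertex) and is not a triangle (which would give $u,v$ a common neighbor), so the component is a chordless cycle. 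Hence the component of $u,v$ either contains a chain, in which case so does $G$, or it is a chordless path or chordless cycle, which is the stated exception.

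The one genuinely delicate point, and the main obstacle, is the bookkeeping when a high-degree vertex sits immediately next to the edge $uv$ (so the relevant walk has length $1$ or $2$): there the chain is $a,u,v,b$ or $b,v,u,a$, and its validity hinges precisely on $a\neq b$, i.e.\ on the no-common-neighbor hypothesis, together with the fact that a degree-$\geq 3$ endpoint cannot coincide with the degree-$2$ vertices $u,v$. I would also take care to rule out an endpoint $p_0$ with $\Deg{p_0}=2$, which by maximality of $P$ can arise only through the wrap-around into the cycle case, so that the three outcomes above are genuinely exhaustive.
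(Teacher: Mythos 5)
The paper states Claim~\ref{clm:chain} without proof, so there is no official argument to compare against; judged on its own, your proof is correct and is exactly the argument the claim implicitly relies on: grow a maximal path of degree-$2$ vertices through the edge $uv$ (using the no-common-neighbor hypothesis only to guarantee $a\neq b$, so that $u,v$ are genuinely interior and the would-be chain has distinct ends), then read off a chain from any endpoint of degree at least $3$, or conclude the component is a chordless path or cycle. One small point worth making explicit: as literally stated the claim speaks of $G$ being a chordless path or cycle, whereas your argument (correctly) establishes the conclusion for the connected component containing $u$ and $v$ --- which is precisely the form in which the claim is invoked in Lemma~\ref{lem:chain-cases} and Step~(S8-6), so you have proved the intended statement.
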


\section{The algorithm of Xiao and Kou}\label{sec:Xiao}

In this section we describe a slightly modified version of 
the algorithm of Xiao and Kou.
A \emph{branching algorithm} is a recursive algorithm for a parameterized
problem that uses \emph{reduction rules} $R_1,\ldots,R_r$.
Each reduction rule $R_i$ has the form
$(G,k) \to (G_{i,1},k-c_{i,1}), \ldots, (G_{i,s_i},k-c_{i,s_i})$.
Additionally, every rule (except the last) has a condition which needs to be
satisfied in order for the rule to be applicable.
Given an instance $(G,k)$ for the problem, the algorithm chooses a rule $R_i$
and then recurse on the instances
$(G_{i,1},k-c_{i,1}), \ldots, (G_{i,s_i},k-c_{i,s_i})$.
The algorithm ``yes'' if and only if at least one recursive call returned
``yes''.
The recursion stops on an instance $(G',k')$ if either $G'$ is an empty graph
or $k' \leq 0$, and the algorithm returns either ``yes'' or ``no'' depending
whether $(G',k')$ is a yes instance.
To analyze the time complexity of the algorithm, define $T(k)$ to be the
maximum number of leaves in the recursion tree of the algorithm when the
algorithm is run on an instance with parameter $k$.
Each reduction rule $R_i$ with $s_i \geq 2$ corresponds to the following
recurrence on $T(k)$:
\[ T(k) \leq T(k-c_{i,1})+T(k-c_{i,2})+\cdots+T(k-c_{i,s_i}). \]
The largest root of the polynomial $P(x) = 1-\sum_{j=1}^{s_i} x^{-c_{i,j}}$
is called the \emph{branching factor} of the recurrence.
Let $\gamma$ be the maximum branching factor over all rules.
Then, the time complexity of the algorithm is $O^*(\gamma^k)$.

The algorithm of Xiao and Kou uses the following basic reduction rules.
\begin{enumerate}
\item[(B1)]
Let $v$ be a vertex.
$(G,k)\to
\Rule{G-v}{k-1}{\{v\}}$,
$\Rule{G-N[v]}{k-\Deg{v}}{N(v)}$, 
$\{\Rule{G-N[\{u,v\}]}{k-|N(\{u,v\})|}{N(\{u,v\})}\}_{u\in N(v)}$.
\item[(B2)]
Let $v$ be a dominating vertex.
$(G,k)\to
\Rule{G-v}{k-1}{\{v\}}$,
$\Rule{G-N[v]}{k-(\Deg{v}-1)}{N(v)\setminus\{u\}}$.
\item[(B3)]
Let $v$ be a weakly dominating vertex.
$(G,k)\to
\Rule{G-v}{k-1}{\{v\}}$, 
$\{\Rule{G-N[\{u,v\}]}{k-|N(\{u,v\})|}{N(\{u,v\})}\}_{u\in N(v)}$.
\item[(B4)] Let $v$ be a vertex with degree~3, whose neighbors $u_1,u_2,u_3$
satisfy $\Deg{u_1} = 1$ and $(u_2,u_3)\in E$.
$(G,k)\to
\Rule{G-N[v]}{k-2}{\{u_2,u_3\}}$, 
$\Rule{G-(N[\{u_2,u_3\}]\cup\{u_1\}}{k-|N(\{u_2,u_3\})|}{N(\{u_2,u_3\})}$.
\end{enumerate}

We now describe the reduction rules of the algorithm, which are also called
steps.
Given an instance $(G,k)$, the algorithm apply the first applicable rule.
Some rules are described as the application of two rules. This can be viewed
as an application of a single reduction rule.
For example, if we have two rules $(G,k) \to (G-S_1,k-c_1),(G-S_2,k-c_2)$
and $(G,k) \to (G-S_3,k-c_3),(G-S_4,k-c_4)$, then applying the first rule
on $(G,k)$ and then applying the second rule on the instance $(G-S_1,k-c_1)$
is equivalent to the rule
$(G,k) \to (G-(S_1 \cup S_3),k-c_1-c_3),
(G-(S_1 \cup S_4),k-c_1-c_4),(G-S_2,k-c_2)$.

\begin{enumerate}
\item[(S1)]
If $S$ is a connected component of $G$ such that $G[S]$ has maximum degree at
most $2$, apply the rule $(G,k)\to \Rule{G-S}{k-\gamma}{C}$ where
$\gamma$ is the minimum size of a 3-path vertex cover of $G[S]$.
\item[(S2)]
If $v$ is a vertex such that $\Deg{v} = 1$ and the neighbor $u$ of $v$ has
degree~2, apply the rule
$(G,k) \to \Rule{G-N[u]}{k-1}{N(u)\setminus\{v\}}$.
\item[(S3)]
If $v$ is a dominating vertex and $\Deg{v}\geq 3$,
apply (B2) on $(G,k)$ and $v$.
\item[(S4)]
If $v,v_1,v_2,v_3$ is a chain in $G$, apply (B3) on $(G,k)$ and $v$.
Apply (S2) on $(G-v,k-1)$ and $v_1$.
\item[(S5)]
If $v$ is a weakly dominating vertex and $\Deg{v} \geq 4$,
apply (B3) on $(G,k)$ and $v$.
\item[(S6)]
If $v$ is a vertex with $\Deg{v} \geq 4$,
apply (B1) on $(G,k)$ and $v$.
\item[(S7)]
If $v$ is a vertex with $\Deg{v} = 2$ and a neighbor $u$ of $v$ is inside
a triangle, apply (B3) on $(G,k)$ and $w$, where $w$ is the neighbor of $v$
other than $u$.
Apply (B4) on $(G-w,k-1)$ and $u$.
\item[(S8)]
If $v$ is a vertex with $\Deg{v} = 2$ and a neighbor $w$ of $v$ has a neighbor
$w_1$ with degree~3, apply (B3) on $(G,k)$ and $w$.
Apply (B2) on $(G-w,k-1)$ and $u$, where $u$ is the neighbor of $v$ other than
$w$.
\item[(S9)] Let $S$ be a connected component of $G$ such that $G[S]$ is a
bipartite graph with parts $S_1,S_2$ such that $\Deg{v} = 2$ for every
$v\in S_1$ and $\Deg{v} = 3$ for every $v\in S_2$.
Apply the rule $(G,k)\to \Rule{G-S}{k-|S_2|}{S_2}$.
\item[(S10)]
Pick an arbitrary vertex $v$. Apply (B1) on $(G,k)$ and $v$.
\end{enumerate}
We note that the order of Step~(S4) is different in the algorithm of Xiao and Kou, but this does not affect the correctness of the algorithm.

The maximum branching factor of Steps (S1)--(S9) is approximately 1.749.
Additionally, Step~(S10) does not affect the asymptotic time complexity of
the algorithm since it is applied at most one in each path in the recursion
tree.
Therefore, the time complexity of the algorithm is $O^*(1.749^k)$.

\section{New algorithm}\label{sec:new}

In this section we describe our algorithm.
Our algorithm is the same as the algorithm of Xiao and Kou except that we
modify Steps~(S5) and~(S8).
Each of these steps is replaced by several sub-steps.
The maximum branching factor of these sub-steps is approximately 1.713.
The branching factors of the other steps in the algorithm of Xiao and Kou
are at most 1.710.
Therefore, the time complexity of the algorithm is $O^*(1.713^k)$.

\subsection{Step (S5)}\label{sec:S5}
Since steps~(S1)--(S4) cannot be applied, the graph $G$ has the following
properties.
\begin{inparaenum}[(P1)]
\item\label{prop:no-degree-1}
Every vertex has degree at least 2.
\item\label{prop:degree-2-neighbors}
For every vertex with degree 2, its neighbors have degree at least 3
and they are not adjacent.
\item\label{prop:no-dominating}
There are no dominating vertices.
\end{inparaenum}

In the following sub-steps, a common condition for applying the steps is
that the graph $G$ has a weakly dominating vertex $v$ with $\Deg{v} \geq 4$.
Each sub-step (except the last) has an additional condition which needs to
be satisfied.
See Figure~\ref{fig:S5} for graphs satisfying the different conditions
of the sub-steps.

\begin{figure}
\subfigure[(S5-1)]{\includegraphics{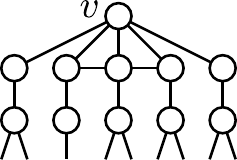}}
\hfill
\subfigure[(S5-2)]{\includegraphics{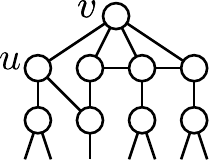}}
\hfill
\subfigure[(S5-3)]{\includegraphics{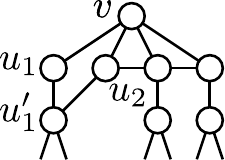}}
\hfill
\subfigure[(S5-4)]{\includegraphics{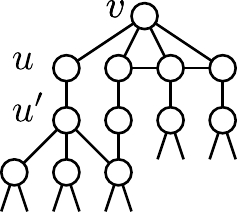}}
\hfill
\subfigure[(S5-5)]{\includegraphics{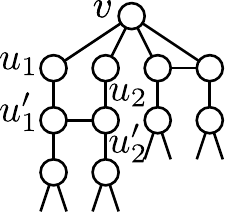}}
\hfill
\subfigure[(S5-6)]{\includegraphics{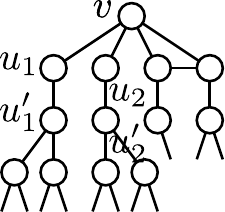}}
\hfill
\subfigure[(S5-7)]{\includegraphics{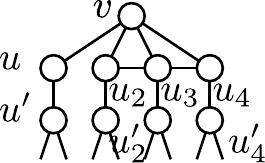}}
\hfill
\subfigure[(S5-8)]{\includegraphics{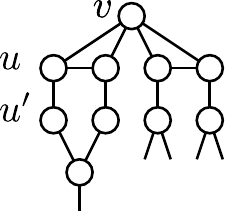}}
\hfill
\subfigure[(S5-9)]{\includegraphics{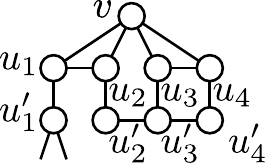}}
\hfill
\subfigure[(S5-11)]{\includegraphics{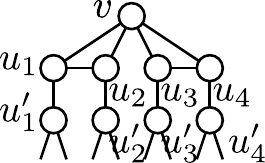}}
\caption{Example for the sub-steps of (S5).}
\label{fig:S5}
\end{figure}

\paragraph{Step (S5-1)}
If $\Deg{v} \geq 5$, apply (B3) on $(G,k)$ and $v$.
Recall that (B3) generates the instances
$(G-v,k-1)$ and
$\{(G-N[\{u,v\}],k-|N(\{u,v\})|)\}_{u\in N(v)}$.
By Property~(P\ref{prop:no-dominating}) and Claim~\ref{clm:dominating},
$|N(\{u,v\})| \geq \Deg{v}$.
Therefore, the recurrence of this step is
\[ T(k) \leq T(k-1)+\Deg{v}\cdot T(k-\Deg{v}). \]
The worst case of the recurrence is when $\Deg{v} = 5$, and the branching
factor is at most 1.660.

In the remaining sub-steps we have $\Deg{v} = 4$ since (S5-1) cannot be applied.
\paragraph{Step (S5-2)}
If there is a neighbor $u'$ of $v$ that has at least two neighbors that are not
neighbors of $v$, apply (B3) on $(G,k)$ and $v$.
By Claim~\ref{clm:dominating},
$|N(\{u,v\})| \geq \Deg{v} = 4$ for every $u\in N(v)$.
Additionally, $|N(\{u',v\})| \geq 5$.
Therefore, the recurrence of this step is (recall that $\Deg{v} = 4$)
\[ T(k) \leq T(k-1)+3T(k-4)+T(k-5) \]
and the branching factor is approximately 1.713.

In the remaining sub-steps we have that every vertex $u\in N(v)$ has
exactly one neighbor that is not in $N(v)$. This neighbor will be denoted
by $u'$.
\paragraph{Step (S5-3)}
If there are vertices $u_1,u_2\in N(v)$ such that $u'_1 = u'_2$,
apply (B3) on $(G,k)$ and $v$.
In this case, $N[\{u_1,v\}] = N[\{u_2,v\}]$ and
$|N(\{u_1,v\})| = |N(\{u_2,v\})|$.
Therefore, the instance
$(G-N[\{u_1,v\}],k-|N(\{u_1,v\})|)$ is the same instance as
$(G-N[\{u_2,v\}],k-|N(\{u_2,v\})|)$.
The algorithm recurses on only one copy of this instance. Thus,
the recurrence of this step is
\[ T(k)\leq T(k-1)+3T(k-4) \]
and the branching factor is approximately 1.659.

\paragraph{Step (S5-4)}
If there is a vertex $u\in N(v)$ that does not have neighbors in $N(v)$ and
$\Deg{u'} \geq 4$,
apply (B3) on $(G,k)$ and $v$.
Note that $\Degb{G-v}{u} = 1$. Therefore, $u'$ dominates $u$ in $G-v$.
Additionally, $\Degb{G-v}{u'} = \Deg{u'}$.
Apply (B2) on $(G-v,k-1)$ and $u'$, which generates the instances
$(G-\{v,u'\},k-2)$ and $(G-(\{v\}\cup N_{G-v}[u']),k-\Degb{G-v}{u'})$.
The recurrence of this step is (the worst case is $\Deg{u'}=4$)
\[ T(k)\leq (T(k-2)+T(k-4))+4T(k-4) \]
and the branching factor is approximately 1.671.

\paragraph{Step (S5-5)}
If there are vertices $u_1,u_2 \in N(v)$ such that each of these vertices
does not have neighbors in $N(v)$ and $(u'_1,u'_2)\in E$,
apply (B3) on $(G,k)$ and $v$.
Since (S5-4) cannot be applied, $\Deg{u'_1},\Deg{u'_2} \leq 3$.
By Property~(P\ref{prop:degree-2-neighbors}), $\Deg{u'_1} = \Deg{u'_2} = 3$.
As in (S5-4), $u'_1$ dominates $u_1$ in $G-v$.
Apply (B2) on $(G-v,k-1)$ and $u'_1$, which generates the instances
$(G'=G-\{v,u'_1\},k-2)$ and $(G-(\{v\}\cup N_{G-v}[u'_1]),k-3)$.
Note that $\Degb{G'}{u_2} = 1$ and $\Degb{G'}{u'_2} = \Deg{u'_2}-1 = 2$
(since $u'_1$ is a neighbor of $u'_2$ in $G$ but not in $G'$, and the other
two neighbors of $u'_2$ in $G$ are also neighbors of $u'_2$ in $G'$).
Next, apply (S2) on $(G',k-2)$ and $u_2$, which generates the instance
$(G' - N_{G'}[u'_2],k-3)$.
The recurrence of this step is
\[ T(k)\leq 2T(k-3)+4T(k-4) \]
and the branching factor is approximately 1.643.

\paragraph{Step (S5-6)}
If there are vertices $u_1,u_2 \in N(v)$ such that each of these vertices does
not have neighbors in $N(v)$,
apply (B3) on $(G,k)$ and $v$.
As in (S5-5), $\Deg{u'_1} = \Deg{u'_2} = 3$ and $u'_1$ dominates $u_1$ in $G-v$.
Apply (B2) on $(G-v,k-1)$ and $u'_1$, which generates the instances
$(G'=G-\{v,u'_1\},k-2)$ and $(G-(\{v\}\cup N_{G-v}[u'_1]),k-3)$.
We have that $u'_2$ dominates $u_2$ in $G'$ and
$\Degb{G'}{u'_2} = \Deg{u'_2} = 3$.
Apply (B2) on $(G',k-2)$ and $u'_2$, which generates the instances
$(G' - \{u'_2\},k-3)$ and
$(G' - N_{G'}[u'_2],k-4)$.
The recurrence of this step is
\[ T(k)\leq (2T(k-3)+T(k-4))+4T(k-4) \]
and the branching factor is approximately 1.703.

\paragraph{Step (S5-7)}
If there is exactly one vertex $u \in N(v)$ that does not have a neighbor in
$N(v)$, apply (B3) on $(G,k)$ and $v$.
Next, apply (B2) on $(G-v,k-1)$ and $u'$, which generates the instances
$(G' = G-\{v,u'\},k-2)$ and $(G-(\{v\}\cup N[u']),k-\Deg{u'})$.
Let $u_2,u_3,u_4$ be the remaining neighbors of $v$, numbered such that
$(u_2,u_3),(u_3,u_4) \in E$.
Note that $\{v,u'\}\cap\{u_1,u_2,u_3,u'_1,u'_2,u'_3\} = \emptyset$.
Apply (B3) on $(G',k-2)$ and $u_3$ ($u_3$ weakly dominates $u_2$ in $G'$),
which generates the instances
$(G'-\{u_3\},k-3)$ and
$(G'-N_{G'}[\{w,u_3\}]),k-2-|N_{G'}(\{w,u_3\})|)$ for every
$w\in N_{G'}(u_3) = \{u_2,u'_3,u_4\}$.
Note that for $w \in \{u_2,u_4\}$,
$|N(\{w,u_3\})| = 3$ (since $N_{G'}(\{u_2,u_3\})=\{u'_2,u'_3,u_4\}$
and $N_{G'}(\{u_4,u_3\})=\{u_2,u'_3,u_4'\}$).
Consider the instance
$(G'' = G'-N_{G'}[\{u'_3,u_3\}],k'' = k-2-|N_{G'}(\{u'_3,u_3\})|)$
which is generated by the application of (B3) on $(G',k-2)$.
We claim that if $(G'',k'')$ is a yes instance then
$(G-N[\{u,v\}],k-4)$ (which is generated by the application of (B3) on
$(G,k)$) is also a yes instance.
To prove the claim, note that if $C$ is 3-path vertex cover of $G''$ of size
at most $k'' =  k-2-|N_{G'}(\{u'_3,u_3\})| = k-3-\Deg{u'_3}$ then 
$C\cup (N(u'_3)\setminus\{u_3\})$ is
a 3-path vertex cover of $G-N[\{u,v\}]$ of size at most
$|C|+(\Deg{u'_3}-1)\leq k-4$.
It follows that the algorithm does not need to recurse on $(G'',k'')$.
Therefore, the recurrence of this step is
\[ T(k)\leq (2T(k-3)+2T(k-5))+4T(k-4) \]
and the branching factor is approximately 1.713.


We now claim that in the remaining sub-steps we have that every $u\in N(v)$ has
exactly one neighbor in $N(v)$.
Clearly, since (S5-6) and~(S5-7) cannot be applied, every $u\in N(v)$ has
at least one neighbor in $N(v)$.
Suppose conversely that there is $u_2 \in N(v)$ which is adjacent to
two distinct vertices $u_1,u_3 \in N(v)$.
Let $u_4$ be the remaining vertex in $N(v)$.
$u_4$ is not a neighbor of $u_2$ since otherwise $u_2$ dominates $v$,
contradicting Property~(P\ref{prop:no-dominating}).
Since $u_4$ has at least one neighbor in $N(v)$, we have that $u_4$ is
adjacent to either $u_1$ or $u_3$.
Without loss of generality assume that $u_4$ is adjacent to $u_3$.
We have that $u_2$ weakly dominates $v$. Additionally, $\Deg{u_2} = 4$
and $u_3 \in N(u_2)$ has two neighbors that are not in $N(u_2)$:
$u'_3$ and $u_4$.
This contradicts the fact that (S5-2) cannot be applied.

\paragraph{Step (S5-8)}
If there is a vertex $u \in N(v)$ whose connected component $S$ in $G-v$ is
a cordless cycle, apply (B3) on $(G,k)$ and $v$.
Then apply the rule $(G-v,k-1) \to (G-(\{v\}\cup S),k-1-\gamma)$, where $\gamma=\lceil|S|/3\rceil$ is the
minimum size of a 3-path vertex of $G[S]$.
We have that $|S|\geq 4$ (since $\{u,u',u_2,u'_2\}\subseteq S$, where
$u_2$ is the unique neighbor of $u$ in $N(v)$), so
$\gamma \geq 2$.
The recurrence of this step is (the worst case is $\gamma = 2$)
\[ T(k)\leq T(k-3)+4T(k-4) \]
and the branching factor is approximately 1.534.

For the remaining sub-steps, denote the neighbors of $v$ by
$u_1,u_2,u_3,u_4$, where $(u_1,u_2)\in E$ and $(u_3,u_4)\in E$ are the
only edges between these vertices.
\begin{lemma}\label{lem:chain-cases}
There is a chain $x,x_1,x_2,x_3$ in $G-v$ such that the graph
$G-\{v,x,x_1,x_2,x_3\}$ contains one of the following structures:
(1) A connected component $S$ of size at least 4 which is a cordless path
or a cordless cycle.
(2) A chain $y,y_1,y_2,y_3$.
\end{lemma}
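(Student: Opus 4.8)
The plan is to read off the rigid local structure that the failure of Steps~(S5-1)--(S5-8) forces on $G-v$, and then run a case analysis driven by two simple observations. Since (S5-1)--(S5-7) are not applicable, $\Deg{v}=4$, the only edges among the four neighbours $u_1,u_2,u_3,u_4$ of $v$ are $u_1u_2$ and $u_3u_4$, and each $u_i$ has degree $3$ in $G$ with neighbours $v$, its partner in $\{u_1,u_2,u_3,u_4\}$, and a single private vertex $u_i'\notin N[v]$; moreover $u_1',u_2',u_3',u_4'$ are pairwise distinct, so the eight vertices $u_1,\dots,u_4,u_1',\dots,u_4'$ are all distinct. Consequently $G-v$ contains the two disjoint arms $u_1'\!-\!u_1\!-\!u_2\!-\!u_2'$ and $u_3'\!-\!u_3\!-\!u_4\!-\!u_4'$, in which $u_1,u_2,u_3,u_4$ have degree exactly $2$. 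The two driving observations are: deleting $v$ lowers only the degrees of $u_1,\dots,u_4$ (from $3$ to $2$), so $G-v$ still has minimum degree at least $2$; and since (S5-8) is not applicable, the component of $G-v$ containing any $u_i$ is not a cordless cycle.

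First I would apply Claim~\ref{clm:chain} to the adjacent degree-$2$ pairs $(u_1,u_2)$ and $(u_3,u_4)$. In each pair the two vertices have no common neighbour in $G-v$ (the neighbours of $u_1$ are $u_2,u_1'$ and those of $u_2$ are $u_1,u_2'$, with $u_1'\neq u_2'$), so Claim~\ref{clm:chain} says that the component $C_{12}$ containing $u_1,u_2$ is a cordless path, a cordless cycle, or contains a chain. Minimum degree $2$ rules out a cordless path and the non-applicability of (S5-8) rules out a cordless cycle, so $C_{12}$ contains a chain; symmetrically the component $C_{34}$ containing $u_3,u_4$ contains a chain.

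Next I would split on whether $C_{12}$ and $C_{34}$ coincide. If they are distinct the conclusion is immediate: take the chain $x,x_1,x_2,x_3$ in $C_{12}$; deleting $\{v,x,x_1,x_2,x_3\}$ leaves the disjoint component $C_{34}$ entirely untouched, so its chain witnesses structure~(2). The substantial case is $C_{12}=C_{34}=:C$, where all of $u_1,u_2,u_3,u_4$ lie in one component. Here I would trace the maximal degree-$2$ paths of $G-v$ carrying the edges $u_1u_2$ and $u_3u_4$; each such path is bounded at both ends by a vertex of degree at least $3$ (it can neither close into a cycle, since $C$ is not a cordless cycle, nor end at a degree-$1$ vertex), and each such \emph{anchor}, followed by the first two degree-$2$ vertices of the path and the vertex after them, forms a chain. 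The aim is to select one chain whose deletion (with $v$) either frees the other arm as a cordless path or cordless cycle of size at least $4$---giving structure~(1), noting each arm already contributes four distinct vertices $u_i',u_i,u_j,u_j'$---or else leaves intact a chain anchored at the opposite end, giving structure~(2).

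The main obstacle is exactly this same-component case, where I must guarantee that the two arms supply two essentially vertex-disjoint chains or that deleting one chain isolates a cordless path or cycle. Two interactions need care: the two degree-$2$ paths may merge into a single long path through all of $u_1,u_2,u_3,u_4$, in which case I would take the chains anchored at its two opposite ends and argue disjointness from the fact that the two arms are separated along the path by at least two distinct private vertices, which forces the path to be long enough; and an anchor of one chain may coincide with, or be adjacent to, a vertex deleted together with the other chain, so that it drops to degree $2$. In the latter situation the surviving-chain argument can fail, and I would instead verify that the freed arm together with the dropped anchor forms precisely a cordless path or cycle of size at least $4$, recovering structure~(1). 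Once these overlaps are resolved, the degree conditions of the surviving chain, or the chordlessness and size of the residual component, follow directly from the distinctness of the eight vertices and the minimum-degree-$2$ property.
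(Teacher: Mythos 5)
Your setup is sound and matches the facts the paper relies on: $\Deg{v}=4$, the only edges inside $N(v)$ are $u_1u_2$ and $u_3u_4$, each $u_i$ has a single private neighbour $u_i'$, the eight vertices are distinct, $G-v$ has minimum degree $2$, and the non-applicability of (S5-8) excludes cordless-cycle components containing the $u_i$. Your distinct-components case is also correct. But the case $C_{12}=C_{34}$ --- which is the entire content of the lemma, since nothing prevents all eight vertices from lying in one component --- is not proved: you describe a strategy (trace maximal degree-$2$ paths, pick anchors, hope the two chains are ``essentially vertex-disjoint'' or that the freed arm becomes a path/cycle) and then explicitly list the interactions under which ``the surviving-chain argument can fail,'' promising to ``verify'' and ``resolve'' them without doing so. That is a plan with acknowledged open sub-cases, not a proof, and the sub-cases are exactly where the difficulty lives (e.g.\ an anchor of the second chain losing a neighbour to the first deletion, or the two degree-$2$ paths merging).

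The idea you are missing makes all of this unnecessary: you do not need a second chain to survive, nor do you need to exhibit the residual path/cycle by hand. It suffices to choose the \emph{first} chain so that its vertex set is disjoint from the four core vertices of the other arm, and then apply Claim~\ref{clm:chain} once more \emph{to the graph $G'=G-\{v,x,x_1,x_2,x_3\}$} and the pair $u_3,u_4$ (adjacent, both of degree $2$ in $G'$, no common neighbour): that single application delivers either a chain in $G'$ or a cordless path/cycle component containing $\{u_3,u_4,u_3',u_4'\}$, i.e.\ exactly the dichotomy (1)/(2) of the lemma, with no overlap analysis. The disjointness is arranged by anchoring the chain on the $u_1u_2$ arm: if $\Deg{u_2'}\geq 3$ take $u_2',u_2,u_1,u_1'$; else if $\Deg{u_1'}\geq 3$ take $u_1',u_1,u_2,u_2'$; else $\Deg{u_1'}=\Deg{u_2'}=2$ and the second neighbour $z$ of $u_2'$ has degree at least $3$ by (P2), so $z,u_2',u_2,u_1$ is a chain, which avoids $\{u_3,u_4,u_3',u_4'\}$ unless $z\in\{u_3',u_4'\}$ --- and in that one exceptional sub-case you instead take the chain $u_3',u_3,u_4,u_4'$ (valid since $\Deg{u_3'}=\Deg{z}\geq3$) and run the same argument on the pair $u_1,u_2$. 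Without this careful choice of chain and the second invocation of Claim~\ref{clm:chain} on $G'$, your argument does not close.
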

\begin{proof}
We have that $\Degb{G-v}{u_i} = 2$ for all $i$, and
$\Degb{G-v}{w} = \Deg{w}$ for all $w \notin N[v]$.
If $\Deg{u'_2} \geq 3$, $x,x_1,x_2,x_3 = u'_2,u_2,u_1,u'_1$ is a chain
in $G-v$.
Otherwise, if $\Deg{u'_1} \geq 3$, $x,x_1,x_2,x_3 = u'_1,u_1,u_2,u'_2$ is
a chain in $G-v$.
Now suppose that $\Deg{u'_1},\Deg{u'_2} \leq 2$.
By Property~(P\ref{prop:no-degree-1}), $\Deg{u'_1} = \Deg{u'_2} = 2$.
Let $z$ be neighbor of $u'_2$ other than $u_2$.
By Property~(P\ref{prop:degree-2-neighbors}), $\Deg{z} \geq 3$.
Therefore, $z,u'_2,u_2,u_1$ is a chain in $G-v$.
If $z\notin\{u'_3,u'_4\}$ define $x,x_1,x_2,x_3 = z,u'_2,u_2,u_1$.
Otherwise, suppose without loss of generality that $z = u'_3$, and
$x,x_1,x_2,x_3 = u'_3,u_3,u_4,u'_4$ is a chain in $G-v$.

Let $G' = G-\{v,x,x_1,x_2,x_3\}$.
In the first three cases above,
$\{v,x,x_1,x_2,x_3\}\cap \{u_3,u_4,u'_3,u'_4\} = \emptyset$.
Since $u_3,u_4$ are adjacent vertices with no common neighbor and
$\Degb{G'}{u_3} = \Degb{G'}{u_4} = 2$, the lemma follows from
Claim~\ref{clm:chain}.
In the fourth case above,
$\{v,x,x_1,x_2,x_3\}\cap \{u_1,u_2,u'_1,u'_2\} = \emptyset$ and 
the lemma follows again from Claim~\ref{clm:chain}.
\end{proof}
\begin{lemma}
There are no dominating vertices in $G-v$.\label{lem:G-v}
\end{lemma}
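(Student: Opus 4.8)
The plan is to argue by contradiction, exploiting the fact that deleting $v$ changes only the neighborhoods of the four vertices $u_1,u_2,u_3,u_4 = N_G(v)$, together with Property~(P\ref{prop:no-dominating}), which guarantees that $G$ itself has no dominating vertex. Suppose for contradiction that some vertex $w$ dominates a vertex $y$ in $G-v$, so that $(y,w)\in E$ and $N_{G-v}(y)\subseteq N_{G-v}[w]$.

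First I would rule out the case $v\notin N_G(y)$. In that case $N_{G-v}(y) = N_G(y)$, and since $N_{G-v}[w]\subseteq N_G[w]$ we would get $N_G(y)\subseteq N_G[w]$, meaning that $w$ dominates $y$ already in $G$, contradicting Property~(P\ref{prop:no-dominating}). Hence $y\in N_G(v)$, say $y = u_i$. By the structural facts established in the sub-steps preceding~(S5-8), each neighbor of $v$ has degree $3$ in $G$: it is adjacent to $v$, to exactly one other vertex of $N_G(v)$ (its partner, say $u_j$), and to exactly one vertex $u_i'$ outside $N_G[v]$. Consequently $N_{G-v}(u_i) = \{u_j,u_i'\}$, and since $w$ must be a neighbor of $u_i$ in $G-v$, we have $w\in\{u_j,u_i'\}$.

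The remaining work is to eliminate both possibilities, and in each the domination condition forces $u_i'$ and $u_j$ to be adjacent. If $w = u_j$, then $u_i'\in N_{G-v}[u_j] = \{u_j,u_i,u_j'\}$; since $u_i'$ is neither $u_j$ (it lies outside $N_G(v)$) nor $u_i$, we get $u_i' = u_j'$. If instead $w = u_i'$, then $u_j\in N_{G-v}[u_i']$, so $(u_j,u_i')\in E$; as $u_j$'s only neighbor outside $N_G(v)$ is $u_j'$, again $u_i' = u_j'$. Either way $u_i$ and $u_j$ are two distinct neighbors of $v$ with $u_i' = u_j'$, which is exactly the condition triggering Step~(S5-3), contradicting the assumption that (S5-3) is not applicable.

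I expect the only delicate point to be the bookkeeping of how $N_{G-v}$ differs from $N_G$ and the verification that each $u_i$ really has degree exactly $3$ with the stated neighbor structure; once that is in place, the two-case argument against $w\in\{u_j,u_i'\}$ is routine, and both cases collapse to the forbidden equality $u_i' = u_j'$.
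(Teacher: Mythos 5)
Your proof is correct and follows essentially the same route as the paper's: reduce to the case where the dominated vertex is some $u_i\in N(v)$ (using that only the neighbors of $v$ lose anything when $v$ is deleted, together with Property~(P3)), and then observe that domination of $u_i$ by either $u_j$ or $u'_i$ would force $u'_i=u'_j$, contradicting the inapplicability of (S5-3). The paper phrases this last step as the non-adjacency of $u'_i$ and $u_j$, which rests on exactly the same fact, and it also separately dismisses the possibility that $u_i$ is the dominating vertex, a case your first reduction already covers.
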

\begin{proof}
By Property~(P\ref{prop:no-dominating}), if $G-v$ has a dominating vertex
then either the dominating vertex is $u_i$ for some $i$,
or the dominated vertex is $u_i$ for some $i$.
The neighbors of $u_i$ are $u'_i$ and $u_j$ for some $j$.
Since $u'_i$ and $u_j$ are not adjacent, $u_i$ is not dominated by either of
these vertices.
Moreover, both $u'_i$ and $u_j$ have neighbors that are not adjacent to $u_i$,
so $u_i$ does not dominate these vertices.
\end{proof}

\paragraph{Step (S5-9)}
If case~(1) of Lemma~\ref{lem:chain-cases} occurs,
apply (B3) on $(G,k)$ and $v$.
Apply (S4) on $(G-v,k-1)$ and the chain $x,x_1,x_2,x_3$, which generates the
instances $(G' = G-\{v,x,x_1,x_2,x_3\},k-3)$ and
$(G-(\{v\}\cup N_{G-v}[\{x',x\}]),k-1-|N_{G-v}(\{x',x\})|)$
for every $x'\in N_{G-v}(x)$.
By Claim~\ref{clm:dominating} and Lemma~\ref{lem:G-v}, for every $x' \in N(x)$,
$|N_{G-v}(\{x',x\})| \geq \Degb{G-v}{x} = \Deg{x} \geq 3$.
Next, apply the rule
$(G',k-3) \to (G'-S,k-3-\gamma)$, where $\gamma \geq 1$ is the minimum size
of a 3-path vertex cover of $S$.
The recurrence of this step is (the worst case is $\gamma = 1$)
\[ T(k)\leq 4T(k-4)+4T(k-4) \]
and the branching factor is approximately 1.682.

In the following two sub-steps we have that case~(3) of
Lemma~\ref{lem:chain-cases} occurs.
\paragraph{Step (S5-10)}
If $y$ is a dominating vertex,
apply (B3) on $(G,k)$ and $v$.
Apply (S4) on $(G-v,k-1)$ and the chain $x,x_1,x_2,x_3$, which generates the
instances $(G' = G-\{v,x,x_1,x_2,x_3\},k-3)$ and
$(G-(\{v\}\cup N_{G-v}[\{x',x\}]),k-1-|N_{G-v}(\{x',x\})|)$
for every $x'\in N_{G-v}(x)$.
Apply (B2) on $(G',k-3)$ and $y$ which generates the instances $(G'-y,k-4)$ and
$(G'-N_{G'}[y],k-2-\Degb{G'}{y})$.
The recurrence of this step is
\[ T(k)\leq (4T(k-4)+T(k-5))+4T(k-4) \]
and the branching factor is approximately 1.712.

\paragraph{Step (S5-11)}
In this step $y$ is not a dominating vertex.
Apply (B3) on $(G,k)$ and $v$.
Apply (S4) on $(G-v,k-1)$ and the chain $x,x_1,x_2,x_3$, which generates the
instances $(G' = G-\{v,x,x_1,x_2,x_3\},k-3)$ and
$(G-(\{v\}\cup N_{G-v}[\{x',x\}]),k-1-|N_{G-v}(\{x',x\})|)$
for every $x'\in N_{G-v}(x)$.
Then, apply (S4) on $(G',k-3)$ and $y,y_1,y_2,y_3$,
which generates the instances
$(G'-\{y,y_1,y_2,y_3\},k-5)$ and
$(G'-N_{G'}[\{y',y\}]),k-3-|N_{G'}(\{y',y\})|)$
for every $y'\in N_{G'}(y)$.
By Claim~\ref{clm:dominating} and since Step~(S5-10) cannot be applied,
$|N_{G'}(\{y',y\})| \geq \Degb{G'}{y}$ for every $y'\in N_{G'}(y)$.
The recurrence of this step is
\[ T(k)\leq (3T(k-4)+T(k-5)+3T(k-6))+4T(k-4) \]
and the branching factor is approximately 1.713.

\subsection{Step (S8)}\label{sec:S8}
Since steps~(S1)--(S7) cannot be applied, the graph $G$ has the following
properties, in addition to the properties stated in Section~\ref{sec:S5}.
\begin{inparaenum}[(P1)]
\setcounter{enumi}{3}
\item\label{prop:degree-2-3}
Every vertex has degree 2 or 3.
\item\label{prop:degree-2-triangle}
A vertex with degree 2 is not adjacent to a vertex in a triangle.
\end{inparaenum}

In the following sub-steps, a common condition for applying the steps is
that there is a vertex $v$ such that $\Deg{v} = 2$ and there is a neighbor
$w$ of $v$ and a neighbor $w_1$ of $w$ with $\Deg{w_1} = 3$.
Let $u$ be the other neighbor of $v$.
Due to Property~(P\ref{prop:degree-2-neighbors}),
$\Deg{w} = \Deg{u} = 3$.
Denote the neighbors of $w$ and $u$ which are not $v$ by $w_1,w_2$ and 
$u_1,u_2$, respectively.
See Figure~\ref{fig:S8} for graphs satisfying the different conditions
of the sub-steps.

\begin{figure}
\subfigure[(S8-1)]{\includegraphics{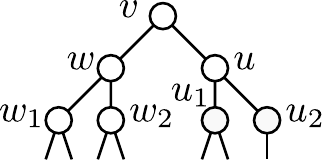}}
\hfill
\subfigure[(S8-2)]{\includegraphics{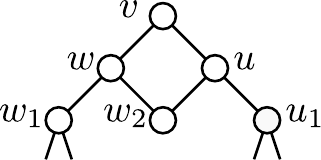}}
\hfill
\subfigure[(S8-3)]{\includegraphics{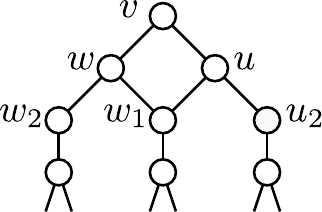}}
\hfill
\subfigure[(S8-4)]{\includegraphics{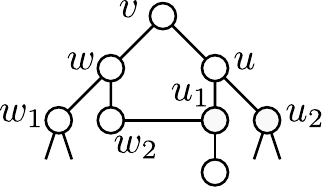}}
\hfill
\subfigure[(S8-5)]{\includegraphics{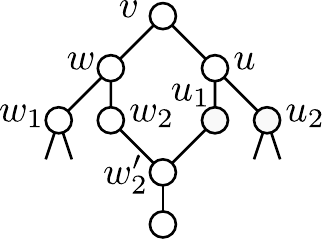}}
\hfill
\subfigure[(S8-6)]{\includegraphics{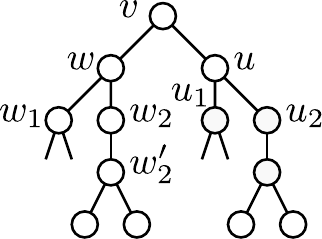}}

\caption{Example for the sub-steps of (S8).}
\label{fig:S8}
\end{figure}

\paragraph{Step (S8-1)}
If $\Deg{w_2} = 3$,
apply (B3) on $(G,k)$ and $w$ which generates the instances
$(G-w,k-1)$ and $(G-N[\{w',w\}],k-|N(\{w',w\})|)$ for every
$w' \in N(w) = \{w_1,w_2,v\}$.
Apply (B2) on $(G-w,k-1)$ and $u$ which generates the instances
$(G-\{u,w\},k-2)$ and $(G-(\{w\}\cup N_{G-w}[u]),k-3)$.
We have that $|N(\{w_1,w\})| = 4$ since $w,w_1$ are adjacent vertices
with degree~3 that do not have common neighbor
(due to Property~(P\ref{prop:degree-2-triangle})).
By the same arguments $|N(\{w_2,w\})| = 4$.
Additionally, $v$ and $w$ do not have common neighbor
(due to Property~(P\ref{prop:degree-2-neighbors})),
thus $|N(\{v,w\})| = 3$.
Therefore, the recurrence of this step is
\[ T(k)\leq T(k-2)+2T(k-3)+2T(k-4) \]
and the branching factor is approximately 1.696.

Due to Property~(P\ref{prop:degree-2-3}),
in the following sub-steps we have $\Deg{w_2} = 2$.
Additionally, at least one of $u_1$ and $u_2$ has degree $2$
(otherwise (S8-1) can be applied on $v$) and assume that $\Deg{u_2} = 2$
(note that the degree of $u_1$ is either~3 or~2).
Let $w'_2$ be the neighbor of $w_2$ which is not $w$, and
let $u'_2$ be the neighbor of $u_2$ which is not $u$.
We have that $\Deg{w'_2} = \Deg{u'_2} = 3$
(due to Property~(P\ref{prop:degree-2-neighbors})).
Let $\delta_1$ be the minimum length of a path in $G-\{u,w\}$ connecting
a vertex from $\{w_1,w_2\}$ to a vertex from $\{u_1, u_2\}$ such that
one endpoint of the path has degree~3 (in $G$) and the other endpoint has
degree~2 (in $G$).
Let $\delta_2$ be the minimum length of a path in $G-\{u,w\}$ connecting
$\{w_2\}$ to a vertex from $\{u_1, u_2\}$ such that both endpoints of the path
have degree~2 (in $G$).
Let $\delta = \min(\delta_1,\delta_2)$.

\paragraph{Step (S8-2)}
Suppose that $\delta = 0$.
Note that by definition $\delta_1 \neq 0$, so we have $\delta_2 = 0$.
In other words, without loss of generality, $w = u_2$.
The set $\{v,u,w,w_2\}$ induces a cordless cycle. The vertices of this cycle
that are adjacent to vertices outside the cycle are $w$ and $u$.
Therefore, there is a minimum 3-path vertex cover of $G$ which contains
$w,v$ and does not contain $v,w_2$.
Therefore, in this case apply the rule $(G,k)\to (G-\{v,u,w,w_2\},k-2)$.

\paragraph{Step (S8-3)}
If $w_1 = u_1$, as in Step~(S8-1),
apply (B3) on $(G,k)$ and $w$ and then apply (B2) on $(G-w,k-1)$ and $u$.
We claim that if the instance
$(G-N[\{w_1,w\}],k-|N(\{w_1,w\})|)=(G-N[\{w_1,w\}],k-4)$ (which is generated
by the application of (B3)) is a yes instance then
the instance $(G-\{u,w\},k-2)$ (which is generated by the application of (B2))
is a yes instance.
To prove the correctness of this claim, note that if $C$ is a 3-path vertex
cover of $G-N[\{w_1,w\}]$ of size at most $k-4$, then
$C\cup \{w_2,x\}$, where $x$ is the neighbor of $w_1$ other than $w$ and $u$,
is a 3-path vertex cover of $G-\{u,w\}$ of size at most $|C|+2 \leq k-2$.
Therefore, the algorithm does not recurse on $(G-N[\{w_1,w\}],k-4)$.
The recurrence of this step is
\[ T(k)\leq T(k-2)+3T(k-3) \]
and the branching factor is approximately 1.672.

\paragraph{Step (S8-4)}
If $\delta = 1$, perform the following.
Note that $\delta_2 \neq 1$ (by the definition of $\delta_2$ and the fact
that $\Deg{w'_2} = 3$), so  $\delta_1 = 1$.
Without loss of generality, $\delta_1$ is a length of a path in $G-\{u,w\}$
between $w_2$ and $u_1$.
As in Step~(S8-1),
apply (B3) on $(G,k)$ and $w$ and then apply (B2) on $(G-w,k-1)$ and $u$.
Let $x$ be the neighbor of $u_1$ other than $u$ and $w_2$.
Note that $x \neq w$ (otherwise the two neighbors $w$ and $u_1$ of $w_2$
are adjacent, contradicting Property~(P\ref{prop:degree-2-neighbors})).
Therefore, $\Degb{G-\{u,w\}}{u_1} = 2$.
Additionally, $\Degb{G-\{u,w\}}{w_2} = 1$.
Therefore, apply Step~(S2) on $(G'=G-\{u,w\},k-2)$
(which is generated by the application of (B2))
and $w_2$, which generates the instance $(G'-N_{G'}[u_1],k-3)$.
The recurrence of this step is
\[ T(k)\leq 4T(k-3)+T(k-4) \]
and the branching factor is approximately 1.664.

\paragraph{Step (S8-5)}
If $\delta = 2$, perform the following.
Assume without loss of generality that the value of $\delta$
is obtained due to a path between $w_2$ and either $u_1$ or $u_2$.
As in Step~(S8-1),
apply (B3) on $(G,k)$ and $w$ and then apply (B2) on $(G-w,k-1)$ and $u$.
Next, apply (B2) on $(G'=G-\{u,w\},k-2)$ and $w'_2$, which generates the
instances $(G'-w'_2,k-3)$ and $(G'-N[w'_2],k-4)$.
We claim that if $(G'-N[w'_2],k-4)$
(which is obtained by the second application of (B2))
is a yes instance then $(G-(\{w\}\cup N_{G-w}[u]),k-3)$
(which is generated by the first application of (B2))
is a yes instance.
To prove the correctness of this claim, note that if $C$ is a 3-path vertex
cover of $G'-N[w'_2]$ of size at most $k-4$, then
$C\cup \{x\}$, where $x$ is the neighbor of $w'_2$ other than $w_2$ and $u_2$,
is a 3-path vertex cover of $G-(\{w\}\cup N_{G-w}[u])$ of size at most
$|C|+1 \leq k-3$.
Therefore, the algorithm does not recurse on the former instance.
The recurrence of this step is
\[ T(k)\leq 4T(k-3)+T(k-4) \]
and the branching factor is approximately 1.664.

\paragraph{Step (S8-6)}
As in Step~(S8-1),
apply (B3) on $(G,k)$ and $w$ and then apply (B2) on $(G-w,k-1)$ and $u$.

Consider the instance $(G'=G-N[\{w_1,w\}],k-4)$ that is generated by the
application of (B3).
Since $w,v$ are not adjacent (due to Property~(P\ref{prop:degree-2-neighbors})),
$\delta \geq 3$, and $w_1 \neq u_1$, we have that
$\{u,u_2,u'_2\}\cap N[\{w_1,w\}] = \emptyset$.
If $\Degb{G'}{u} = 1$, then since $\Degb{G'}{u_2} = 2$,
apply (S2) on $(G',k-4)$. 
Otherwise, $\Degb{G'}{u} = 2$ and by Claim~\ref{clm:chain}
either the connected component $S$ of $u$ in $G'$ is a path or a
cycle (and no additional edges), or $S$ contains a chain $x,x_1,x_2,x_3$.
In the former case the algorithm computes the minimum size $\gamma$ of a
3-path vertex cover of $S$, and apply the rule
$(G',k-4) \to (G'-S,k-4-\gamma)$ (note that $|S|\geq 3$ so $\gamma \geq 1$).
In the latter case, if $x$ is a dominating vertex, apply (B2) on $(G',k-4)$
and $x$. 
Otherwise, apply (S4) on $(G',k-4)$ and $x,x_1,x_2,x_3$, which generates
the instances $(G'-\{x,x_1,x_2,x_3\},k-6)$ and
$(G'-N[\{x',x\}],k-4-|N_{G'}(\{x',x\})|)$ for every $x'\in N_{G'}(x)$.

We also handle the instance $(G-N[\{w_2,w\}],k-3)$ in a similar way.

We now consider the instances $(G-\{u,w\},k-2)$ and
$(G''=G-(\{w\}\cup N_{G-w}[u]),k-3)$ that were generated by the application of
(B2) on $(G-w,k-1)$.
Apply (B2) on $(G-\{u,w\},k-2)$ and $w'_2$, which generates the instances
$(G-\{u,w,w'_2\},k-3)$ and $(G-(\{u,w\}\cup N[w'_2]),k-4)$.
For the instance $(G'',k-3)$ we have that
$N[w'_2]\cap(\{w\}\cup N_{G-w}[u])=\emptyset$
(since $w,v$ are not adjacent, $w_1,w_2$ are not adjacent, and $\delta \geq 3$).
Therefore, apply (B2) on $(G'',k-3)$ and $w'_2$ which generates the instances
$(G''-w'_2,k-4)$ and $(G''-N_{G''}[w'_2],k-5)$.
The recurrence of this step is (the worst case occurs when (S4) is applied on
$(G',k-4)$ and on $(G-N[\{w_2,w\}],k-3)$).
\[ T(k)\leq (T(k-3)+2T(k-4)+T(k-5))+
(T(k-3)+T(k-5)+4T(k-6)+3T(k-7))\]
and the branching factor is approximately 1.711.

\bibliographystyle{plain}
\bibliography{parameterized}

\end{document}